\newtheorem{proposition}{Proposition}
\newenvironment{proof}[1][Proof]{\noindent\textbf{#1.} }{\ \rule{0.5em}{0.5em}}
\begin{document}

\title{On the Behavioral Consequences of Reverse Causality\thanks{%
This paper cannibalizes and supersedes a previous paper titled
\textquotedblleft A Simple Model of Monetary Policy under Phillips-Curve
Causal Disagreements\textquotedblright . Financial support by ERC Advanced
Investigator grant no. 692995 is gratefully acknowledged. I also thank Yair
Antler, Tuval Danenberg, Nathan Hancart and Heidi Thysen for helpful
comments.}}
\author{Ran Spiegler\thanks{%
Tel Aviv University, University College London and CFM. URL:
http://www.tau.ac.il/\symbol{126}rani. E-mail: rani@tauex.tau.ac.il.}}
\maketitle

\begin{abstract}
Reverse causality is a common causal misperception that distorts the
evaluation of private actions and public policies. This paper explores the
implications of this error when a decision maker acts on it and therefore
affects the very statistical regularities from which he draws faulty
inferences. Using a quadratic-normal parameterization and applying the
Bayesian-network approach of Spiegler (2016), I demonstrate the subtle
equilibrium effects of a certain class of reverse-causality errors, with
illustrations in diverse areas: development psychology, social policy,
monetary economics and IO. In particular, the decision context may protect
the decision maker from his own reverse-causality causal error. That is, the
cost of reverse-causality errors can be lower for everyday decision makers
than for an outside observer who evaluates their choices.\bigskip \bigskip
\pagebreak
\end{abstract}

\section{Introduction}

Reverse causality is a ubiquitous error. Observing a correlation between two
variables of interest, we often form an instinctive causal interpretation in
one direction, yet true causation may go in the opposite direction. Reverse
causality is often invoked as a warning to social scientists and other
researchers, to beware of naive causal interpretations of correlational
data. For example, Harris (2011) famously criticized the developmental
psychology literature for taking for granted that observed correlation
between child personality and parental behavior reflected a causal link from
the latter to the former. Harris argued that causality might go in the
opposite direction: parental behavior could be a \textit{response} to the
child's innate temperament.

Clearly, when researchers and other outside observers interpret correlations
under the sway of a reverse-causality error, their evaluation of private
interventions or public policy may become distorted. But what happens when
decision makers $act$ on a reverse-causality error, such that their
resulting behavior affects the very statistical regularities from which they
draw inferences through their faulty prism?

For illustration, consider the developmental psychology example mentioned
above, and embed it in the following decision context. A counselor observes
a child's initial condition and chooses a therapy. These two factors cause
the child's behavior. Parental behavior is a response to the child's
behavior and the counselor's therapy. However, the counselor operates under
the perception that parental behavior is an independent variable that joins
the list of factors that cause the child's behavior. This difference can be
represented by two directed acyclic graphs (DAGs), which conventionally
represent causal models (Pearl (2009)):%
\begin{eqnarray}
&&%
\begin{array}{ccccc}
\theta  & \rightarrow  & a &  &  \\ 
& \searrow  & \downarrow  & \searrow  &  \\ 
&  & x & \rightarrow  & y%
\end{array}%
\qquad \qquad 
\begin{array}{ccccc}
\theta  & \rightarrow  & a &  &  \\ 
& \searrow  & \downarrow  &  &  \\ 
&  & x & \leftarrow  & y%
\end{array}
\TCItag{Figure 1} \\
&&\quad \text{True model}\qquad \qquad \qquad \quad \text{Subjective model} 
\nonumber
\end{eqnarray}%
In this diagram, $\theta $ represents initial conditions (e.g. family
characteristics), $a$ represents the counselor's action, $x$ represents the
child's behavior and $y$ represents the parent's behavior. The counselor's
subjective causal model departs from the true model by inverting the link
between $x$ and $y$ and severing the link from $a$ into $y$. That is, the
counselor regards $y$ as an exogenous variable that causes $x$, whereas in
fact it is an endogenous variable and $x$ is one of its direct cause.

The diagram is paradigmatic, in the sense that it could fit many other
real-life situations. In a monetary economics context, different
specifications of the Phillips curve disagree over which of the two
variables, inflation and employment, is a dependent variable and which is an
explanatory variable. This can be viewed as a disagreement over the
direction of causality (see Sargent (1999) and Cho and Kasa (2015) -- I will
discuss this example in detail in Sections 2 and 3). In a social policy
context, $\theta $ represents initial socioeconomic or demographic
conditions, $a$ represents welfare policy, $x$ represents poverty levels or
income inequality, and $y$ is a public-health indicator. According to this
interpretation, public health is an objective consequence of income
inequality or poverty, yet designers of social policy operate under the
assumption that health is an exogenous, independent factor that affects the
social outcome rather than being caused by it (for a survey of a literature
that wrestles with the causal relation between income inequality and health,
see Pickett and Wilkinson (2015)).

I should emphasize that in neither of these examples do I take an empirical
stand on whether the supposedly \textquotedblleft true
model\textquotedblright\ is indeed true. My only objective is to take
familiar disagreements over the direction of causality in various contexts
and place them in a $decision$ context, in which the objective
data-generating process is affected by the decision maker's behavior, which
in turn is a response to statistical inferences that reflect his causal
error. In this way, reverse causality ceases to be the exclusive problem of
outside observers who make scientific claims and becomes a payoff-relevant
concern for decision makers with skin in the game. Is there a fundamental
difference between the two scenarios? How does the decision context affect
the magnitude of errors induced by reverse causality?

To study these questions, I apply the modeling approach developed in
Spiegler (2016,2020a), which borrows the formalism of Bayesian networks from
the Statistics and AI literature (Cowell et al. (1999), Pearl (2009)) to
analyze decision making under causal misperceptions. According to this
approach, a decision maker (DM henceforth) fits a DAG-represented causal
model to an objective joint distribution over the variables in his model.
The DM best-replies to the possibly distorted belief that arises from this
act of fitting a misspecified causal model to objective data. I apply this
model to the specification of true and subjective causal models depicted in
Figure 1.

I employ a quadratic-normal parameterization, such that the DM's payoffs are
given by a quadratic loss function of $a$ and $x$, and the joint
distribution over the four variables is Gaussian. This parameterization
helps in terms of analytic tractability and transparent interpretation of
the results, which allows for crisp comparative statics. However, it is also
appropriate because fitting a DAG-represented model to a Gaussian
distribution is equivalent to OLS estimation of a recursive system of linear
regressions (Koller and Friedman (2009, Ch. 7)). Causal interpretations of
linear regressions permeate discussions of reverse causality in
social-science settings. Assuming that the environment is Gaussian ensures
that the linearity of the DM's model is not wrong per se -- the
misspecification lies entirely in the underlying causal structure, which is
represented by the DAG.

The analysis of this simple model demonstrates the subtle equilibrium
effects of decision making under reverse causality errors. One effect is
that features of the model that would be irrelevant in a quadratic-normal
model with rational expectations -- specifically, the variances of the noise
terms in the equations for $x$ and $y$ -- play a key role when the DM
exhibits the reverse-causality error given by Figure 1.

The results also illuminate the difference between committing a
reverse-causality error as a \textquotedblleft spectator\textquotedblright\
or as an \textquotedblleft actor\textquotedblright . A special case of the
specification given by Figure 1 is that the action's objective effect on the
outcome variables $x$ and $y$ is null. In this case, the DM can be viewed as
a \textquotedblleft spectator\textquotedblright . This DM's
reverse-causality error induces a wrong prediction of $x$ and he suffers a
welfare loss as a result. Compare this with a DM who is an \textquotedblleft
actor\textquotedblright , in the sense that his action has a non-null direct
effect on $y$ (but still no effect on $x$). I show that when this effect $%
fully$ offsets the direct effect of $x$ on $y$ (in the sense that $E(y\mid
a,x)=x-a$), the DM's equilibrium strategy is as if he has rational
expectations. This DM suffers no welfare loss as a result of his
reverse-causality error. Thus, the decision context protects an
\textquotedblleft actor\textquotedblright\ from his error in a way that it
does not for a \textquotedblleft spectator\textquotedblright . The lesson is
that in some situations, reverse causality may be less of a problem for DMs
than for the scientists who analyze their behavior.

\section{The Model}

A DM observes an exogenous state of Nature $\theta \sim N(0,\sigma _{\theta
}^{2})$ before taking a real-valued action $a$. The DM is an
expected-utility maximizer with vNM utility function $u(a,x)=-(x-a)^{2}$,
where $x$ is an outcome variable that is determined by $\theta $ and $a$
according to the equation%
\begin{equation}
x=\theta -\gamma a+\varepsilon  \label{eq x}
\end{equation}%
Another outcome $y$ is then determined by $x$ and $a$ according to the
equation%
\begin{equation}
y=x-\lambda a+\eta  \label{eq y}
\end{equation}%
The parameters $\gamma ,\lambda $ are constants that capture the direct
effects of $a$ on the outcome variables $x$ and $y$. Assume $\gamma ,\lambda
\in \lbrack 0,1]$, such that these direct effects are of an
\textquotedblleft \textit{offsetting}\textquotedblright\ nature. The terms $%
\varepsilon \sim N(0,\sigma _{\varepsilon }^{2})$ and $\eta \sim N(0,\sigma
_{\eta }^{2})$ are independently distributed noise variables. The ratio of
the variances of these noise terms, denoted%
\begin{equation}
\tau =\frac{\sigma _{\varepsilon }^{2}}{\sigma _{\eta }^{2}}  \label{tau}
\end{equation}%
will play an important role in the sequel.

A (potentially mixed) strategy for the DM is a function that assigns a
distribution over $a$ to every realization of $\theta $. Once we fix a
strategy, we have a well-defined joint probability measure $p$ over all four
variables $\theta ,a,x,y$, which can be written as a factorization of
marginal and conditional distributions as follows:\footnote{%
The factorization is invalid when some terms involve conditioning on zero
probability events. Since this is not going to be a problem for us, I ignore
the imprecision.}%
\begin{equation}
p(\theta ,a,x,y)=p(\theta )p(a\mid \theta )p(x\mid \theta ,a)p(y\mid a,x)
\label{BNfactorization}
\end{equation}%
The term $p(a\mid \theta )$ represents the DM's strategy. This factorization
reflects the causal structure that underlies $p$ and that can be described
by the leftward DAG in Figure 1, which I denote $G^{\ast }$.\footnote{%
Schumacher and Thysen (2020) study a principal-agent model in which the true
causal model has the same form as $G^{\ast }$, although the nodes have
different interpretations (in particular, the action variable corresponds to
the initial node), and the agent's subjective causal model is quite
different.}

I interpret $p$ as a long-run distribution that results from many
repetitions of the same decision problem. Our DM faces a one-shot decision
problem, against the background of the long-run experience accumulated by
many previous generations of DMs who faced the same one-shot problem (each
time with a different, independent draw of the random variables $\theta
,\varepsilon ,\eta $).\medskip

\noindent \textit{The correct-specification benchmark}

\noindent Suppose the DM correctly perceives the true model -- i.e., he has
\textquotedblleft rational expectations\textquotedblright . Then, he will
choose $a$ to minimize%
\[
E[(x-a)^{2}\mid \theta ,a] 
\]%
where the expectation $E$ is taken w.r.t the objective conditional
distribution $p_{G^{\ast }}(x\mid \theta ,a)\equiv p(x\mid \theta ,a)$. This
means that we can plug (\ref{eq x}) into the objective function and use the
fact that $\varepsilon $ is independently distributed with $E(\varepsilon
)=0 $. Hence, the rational-expectations prediction is that the DM will choose%
\begin{equation}
a_{G^{\ast }}(\theta )=\frac{\theta }{1+\gamma }  \label{a_RE}
\end{equation}

Note that $y$ has no direct payoff relevance for the DM. Moreover, since $y$
does not affect the transmission from $\theta $ and $a$ to $x$, it is
entirely irrelevant for the DM's decision, because $y$ is a consequence of $%
x $ and $a$. Therefore, the constant $\lambda $ plays no role in the DM's
rational action. Moreover, the variances of the noise terms $\varepsilon
,\eta $ are irrelevant for $a_{G^{\ast }}$.\medskip

Now relax the assumption that the DM correctly perceives the true model.
Instead, he believes that the distribution over $\theta ,a,x,y$ obeys a
causal structure that is given by the rightward DAG in Figure 1, which I
denote by $G$. Compared with $G^{\ast }$, the DAG $G$ inverts the causal
link between $x$ and $y$, and also removes the link $a\rightarrow y$. This
means that the DM regards $y$ as an exogenous direct cause of $x$, instead
of the consequence of $x$ and $a$ that it actually is.

The following are two examples of situations that fit this specification of
the true and subjective DAGs $G^{\ast }$ and $G$.\medskip

\noindent \textit{Example 2.1: Parenting}

\noindent Here is a variant on the story presented in the Introduction. A
parent observes characteristics of his child (possibly the child's home
behavior in a previous period), captured by the state variable $\theta $. He
then chooses how toughly to behave toward the child, as captured by the
variable $a$. The child's resulting home behavior is captured by some index $%
x$. The quality of the child's school interactions with teachers and peers,
as measured by the index $y$, is a consequence of the child's and the
parent's home behavior. However, the parent believes that school
interactions are an independent driver of the child's home behavior, rather
than a consequence of home interactions.\medskip

\noindent \textit{Example 2.2: Quantity setting}

\noindent A large firm observes a demand indicator $\theta $ before setting
its production quantity $a$. The price $x$ is a function of demand and
quantity. The variable $y$ represents a competitive fringe that reacts to
the price and the production quantity. However, the firm believes that the
market agents that constitute the competitive fringe are not price takers,
but rather independent decision makers that affect the price.\medskip

The quadratic loss function $u(x,a)=-(x-a)^{2}$ is not necessarily plausible
in these examples. However, in both cases we can write down a plausible
quadratic utility function (e.g. due to linear demand and constant marginal
costs in Example 2.2), and then redefine $a,x,y$ via some linear
transformation to obtain the quadratic loss specification, without any
effect on our formal results (that is, the DM's action will be the same up
to the above linear transformation). The reason is that under any quadratic
utility function, the DM's optimal action is linear in $E_{G}(x\mid \theta
,a)$, which is his subjective expectation of $x$ conditional on $\theta $
and $a$.\medskip

Following Spiegler (2016), I assume that given the long-run distribution $p$%
, the DM forms a subjective belief, denoted $p_{G}$, by fitting his model $G$
to $p$ according to the Bayesian-network factorization formula. For an
arbitrary DAG $R$ over some collection of variables $x_{1},...,x_{n}$, the
formula is%
\begin{equation}
p_{R}(x_{1},...,x_{n})=\dprod\limits_{i=1}^{n}p(x_{i}\mid x_{R(i)})
\label{bayesnetfactor}
\end{equation}%
where $x_{M}=(x_{i})_{i\in M}$ and $R(i)$ represents the set of variables
that are viewed as direct causes of $x_{i}$ according to $R$. Formula (\ref%
{BNfactorization}) was a special case of (\ref{bayesnetfactor}) for $%
R=G^{\ast }$. Given our specification of the subjective DAG $G$, (\ref%
{bayesnetfactor}) reads as follows:%
\[
p_{G}(\theta ,a,x,y)=p(\theta )p(y)p(a\mid \theta )p(x\mid \theta ,a,y) 
\]

The interpretation of this belief-formation model is as follows. The DM's
DAG $G$ is a misspecified subjective model. The DM perceives the statistical
regularities in his environment through the prism of his incorrect
subjective model. In other words, he fits his model to the long-run
statistical data (generated by the true distribution $p$), producing the
subjective distribution $p_{G}$. A related interpretation is based on
Esponda and Pouzo (2016). We can regard $G$ as a set of probability
distributions -- i.e. all distributions $p^{\prime }$ for which $%
p_{G}^{\prime }\equiv p^{\prime }$. The DM goes through a process of
Bayesian learning, by observing a sequence of $i.i.d$ draws from $p$. The
limit belief of this process is $p_{G}$ (see Spiegler (2020a)).

The DM's subjective belief $p_{G}$ induces the following conditional
distribution over $x$, given the DM's information and action:%
\begin{equation}
p_{G}(x\mid \theta ,a)=\int_{y}p(y)p(x\mid \theta ,a,y)
\label{p_G(x|theta,a)}
\end{equation}%
The DM chooses $a$ to maximize his expected utility w.r.t this conditional
belief, which means minimizing 
\begin{equation}
\int_{x}p_{G}(x\mid \theta ,a)(x-a)^{2}\medskip  \label{objective function}
\end{equation}

Spiegler (2016) demonstrates that in general, this behavioral model can be
ill-defined because unlike $p(x\mid \theta ,a)$, the subjective conditional
probability $p_{G}(x\mid \theta ,a)$ need not be invariant to the DM's
strategy $(p(a\mid \theta ))$. To resolve this ambiguity, Spiegler (2016)
introduces a notion of \textquotedblleft personal
equilibrium\textquotedblright , such that the DM's subjective optimization
is formally defined as an equilibrium concept (this is a hallmark of models
of decision making under misspecified models -- see also Esponda and Pouzo
(2016)). However, given our specification of $G$ and $G^{\ast }$, this
subtlety is irrelevant.\footnote{%
Specifically, the reason is that the payoff-relevant variables $\theta ,a,x$
form a clique in $G$, while $G$ treats $y$ as independent of $\theta ,a$.
Spiegler (2016) provides a general condition for specifications of $G$ and $%
G^{\ast }$ that allow the modeler to ignore personal-equilibrium effects.}
Therefore, in what follows I analyze the DM's subjective optimization as
such, without invoking the notion of personal equilibrium. The variation in
Section 4.2 will force us to revisit this issue.$\medskip $

\noindent \textit{Comment: The pure prediction case}

\noindent When $\gamma =0$, the DM's action has no causal effect on $x$. In
other words, the DM faces a pure prediction problem: his subjectively
optimal action is equal to his subjective expectation of $x$ conditional on $%
\theta $. Therefore, the link $a\rightarrow x$ can be omitted from $G^{\ast
} $. This raises the following question: can we be equally cavalier about
whether to include the link $a\rightarrow x$ in the subjective DAG $G$?
Including the link means that the DM erroneously regards his action as a
direct cause of $x$ (but he is open to the possibility that when estimated,
the measured effect will be null).

It turns out that the answer to our question is positive, in the following
sense. When we remove the link $a\rightarrow x$ from $G$, we can no longer
ignore the subtleties that called for treating subjective maximization as an
equilibrium phenomenon. If we apply the concept of personal equilibrium, and
impose in addition the requirement that the DM's equilibrium strategy is
pure, then the model's prediction will be the same as when $G$ includes the
link $a\rightarrow x$. If the relation between $a$ and $\theta $ were noisy
-- e.g. by introducing a random shock to the DM's behavior, or an explicit
preference shock -- this would cease to be true, and we would have to take a
stand on whether $G$ includes the link $a\rightarrow x$ -- i.e. whether the
DM understands that $a$ has no causal effect on $x$.$\medskip $

\noindent \textit{Example 2.3: Inflation forecasting with a wrong Phillips
curve}

\noindent The following variation on Sargent's (1999) simplified version of
the well-known monetary model due to Barro and Gordon (1983) fits the pure
prediction case.\footnote{%
This specification (with rational expectations) was also employed by Athey
et al. (2005).} The variable $\theta $ represents a monetary quantity such
as money growth, which determines inflation $x$ via (\ref{eq x}).
Independently, the private sector (corresponding to the DM in our model)
observes $\theta $ and forms an inflation forecast $a$. The variable $y$
represents employment, which is determined as a function of inflation and
inflationary expectations via the Phillips curve (\ref{eq y}). The parameter 
$\lambda $ measures the extent to which anticipated inflation offsets the
real effect of actual inflation; the case of $\lambda =1$ captures the
\textquotedblleft new classical\textquotedblright\ assumption that only
unanticipated inflation has real effects.

The private sector's inflation forecast is based on a misspecified Phillips
curve, which regards $y$ as an independent, explanatory variable and $x$ as
the dependent variable. Sargent (1999) and Cho and Kasa (2015) refer to this
inversion of the causal inflation-employment relation in terms of an
econometric identification strategy, and dub it a \textquotedblleft
Keynesian fit\textquotedblright . The key difference between these papers
and the present example is that they assume that the private sector has
rational expectations; it is the monetary authority (which chooses $\theta $%
) that operates under a misspecified model (Spiegler (2016) describes how to
reformulate Sargent's example in the DAG language). In contrast, in the
present example, it is the \textit{private sector} that bases its inflation
forecasts on a wrong Phillips curve.

\section{Analysis}

The following is the main result of this paper.$\medskip $

\begin{proposition}
\label{mainresult}The DM's subjectively optimal strategy is%
\begin{equation}
a_{G}(\theta )=\frac{\theta }{1+\gamma +\tau (1-\lambda )}
\label{a(theta,u)}
\end{equation}
\end{proposition}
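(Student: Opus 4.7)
The plan is to exploit the Gaussian structure to evaluate the DM's subjective conditional expectation $E_{G}(x\mid \theta ,a)$ in closed form and then solve the resulting first-order condition. I would begin by conjecturing a linear pure strategy $a=\alpha \theta $, to be validated ex post. Under any such strategy, the long-run joint distribution over $(\theta ,a,x,y)$ is jointly Gaussian with zero means; in particular the marginal $p(y)$ is Gaussian with $E(y)=0$.

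The core calculation is to pin down the objective conditional $p(x\mid \theta ,a,y)$. From (\ref{eq x})--(\ref{eq y}), conditional on $(\theta ,a)$ the pair $(x,y)$ is jointly Gaussian with means $(\theta -\gamma a,\,\theta -(\gamma +\lambda )a)$, variances $(\sigma _{\varepsilon }^{2},\,\sigma _{\varepsilon }^{2}+\sigma _{\eta }^{2})$, and covariance $\sigma _{\varepsilon }^{2}$. Crucially, none of these moments depend on the DM's strategy, which is exactly what makes the personal-equilibrium subtlety flagged in the footnote harmless. The standard Gaussian regression formula then yields
\[
E(x\mid \theta ,a,y)=(1-\rho )(\theta -\gamma a)+\rho (y+\lambda a),
\]
where $\rho \equiv \sigma _{\varepsilon }^{2}/(\sigma _{\varepsilon }^{2}+\sigma _{\eta }^{2})=\tau /(1+\tau )$.

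Substituting this into formula (\ref{p_G(x|theta,a)}) and using $E(y)=0$ gives $E_{G}(x\mid \theta ,a)=(1-\rho )(\theta -\gamma a)+\rho \lambda a$. Because the subjective conditional variance of $x$ given $(\theta ,a)$ does not depend on the DM's action, minimizing the quadratic objective (\ref{objective function}) reduces to the first-order condition $a=E_{G}(x\mid \theta ,a)$, i.e.
\[
a\bigl[1-\rho \lambda +(1-\rho )\gamma \bigr]=(1-\rho )\theta .
\]
Dividing through by $1-\rho =1/(1+\tau )$ and using $\rho /(1-\rho )=\tau $ simplifies the bracket to $1+\gamma +\tau (1-\lambda )$, yielding (\ref{a(theta,u)}). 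The resulting rule is linear in $\theta $, confirming the initial conjecture. The main substantive step is the Gaussian regression computation producing $E(x\mid \theta ,a,y)$ and tracking how $\lambda $ enters via the shift $y+\lambda a$ (this is where the ``actor'' channel encoded by $\lambda $ cancels part of the error); everything downstream is routine algebra.
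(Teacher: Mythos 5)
Your proposal is correct and follows essentially the same route as the paper's proof: you compute $E(x\mid\theta,a,y)$ by Gaussian regression (the paper phrases the identical calculation as signal extraction of $\varepsilon$ from $\varepsilon+\eta$), integrate out $y$ using $E(y)=0$, and solve the first-order condition $a=E_G(x\mid\theta,a)$. The only cosmetic difference is that you establish $E(y)=0$ by conjecturing a linear strategy and verifying ex post, where the paper derives the strategy as a function of $E(y)$ and then closes the fixed point.
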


\begin{proof}
By assumption, the DM chooses $a$ to minimize (\ref{objective function}).
Since $p_{G}$ is Gaussian, the conditional distribution $p_{G}(x\mid \theta
,a,y)$ can be written as a regression equation%
\begin{equation}
x=c_{0}+c_{1}\theta +c_{2}a+c_{3}y+\psi  \label{x_regression}
\end{equation}%
where $c_{0},c_{1},c_{2},c_{3}$ are coefficients and $\psi \sim N(0,\sigma
_{\psi }^{2})$ is an independent noise term. Since $G$ treats $y$ and $\psi $
as independently distributed error terms with $E(\psi )=0$, (\ref%
{x_regression}) implies%
\begin{equation}
E_{G}(x\mid \theta ,a)=c_{0}+c_{1}\theta +c_{2}a+E(y)  \label{EG(x|a,theta)}
\end{equation}%
It also follows from (\ref{x_regression}) that (\ref{objective function})
can be rewritten as%
\begin{equation}
\int_{y}p(y)E_{\psi }(c_{0}+c_{1}\theta +c_{2}a+c_{3}y-a+\psi )^{2}
\label{objective_plug}
\end{equation}%
where the expectation over $\psi $ is taken w.r.t the independent
distribution $N(0,\sigma _{\psi }^{2})$. Note that the DM integrates over $y$
as if it is independent of $a$ because indeed, $G$ posits that $y\perp a$.
This assumption is wrong, according to the true model $G^{\ast }$.

Choosing $a$ to minimize (\ref{objective_plug}), and using the observation
that $E(\psi )=0$, we obtain the following first-order condition:%
\[
(c_{2}-1)\cdot \lbrack c_{0}+c_{1}\theta +c_{2}a+c_{3}E(y)-a]=0 
\]%
This can be written as%
\[
(c_{2}-1)\cdot \lbrack E_{G}(x\mid \theta ,a)-a]=0 
\]%
Therefore, the DM's strategy in a linear personal equilibrium is:%
\begin{equation}
a=E_{G}(x\mid \theta ,a)=\frac{c_{0}+c_{1}\theta +c_{3}E(y)}{1-c_{2}}
\label{strategy}
\end{equation}

I will show that $c_{0}=E(y)=0$, derive $c_{1}$ and $c_{2}$ and show they
are uniquely determined (and in particular that $c_{2}\neq 1$). To do so,
let us derive 
\begin{equation}
E_{G}(x\mid \theta ,a)=\int_{y}p(y)E(x\mid \theta ,a,y)
\label{EG(x|theta,a)}
\end{equation}%
Let us calculate $E(x\mid \theta ,a,y)$. Plugging (\ref{eq x}) in (\ref{eq y}%
), we obtain%
\begin{equation}
x=\theta -\gamma a+\varepsilon =y+\lambda a-\eta  \label{x and y}
\end{equation}%
Therefore,%
\begin{equation}
E(x\mid \theta ,a,y)=E(\theta -\gamma a+\varepsilon \mid \theta ,a,y)=\theta
-\gamma a+E(\varepsilon \mid \theta ,a,y)  \label{E(x|theta,a,y)}
\end{equation}

To derive $E(\varepsilon \mid \theta ,a,y)$, rearrange (\ref{x and y}) to
obtain%
\[
\varepsilon +\eta =y+(\lambda +\gamma )a-\theta 
\]%
Therefore,%
\[
E(\varepsilon \mid \theta ,a,y)=E(\varepsilon \mid \varepsilon +\eta
=y+(\lambda +\gamma )a-\theta ) 
\]%
Since $\varepsilon \sim N(0,\sigma _{\varepsilon }^{2})$ and $\eta \sim
N(0,\sigma _{\eta }^{2})$ are independent variables, we can use the standard
signal extraction formula to obtain%
\begin{equation}
E(\varepsilon \mid \theta ,a,y)=\beta \cdot (y+(\lambda +\gamma )a-\theta )
\label{E epsilon}
\end{equation}%
where%
\begin{equation}
\beta =\frac{\sigma _{\varepsilon }^{2}}{\sigma _{\varepsilon }^{2}+\sigma
_{\eta }^{2}}=\frac{\tau }{1+\tau }  \label{beta}
\end{equation}

Plugging (\ref{E epsilon}) in (\ref{E(x|theta,a,y)}), we obtain%
\[
E(x\mid \theta ,a,y)=(1-\beta )\theta +(\beta \lambda +\beta \gamma -\gamma
)a+\beta y
\]%
Plugging this expression in (\ref{EG(x|theta,a)}), we obtain%
\[
E_{G}(x\mid \theta ,a)=(1-\beta )\theta +(\beta \lambda +\beta \gamma
-\gamma )a+\beta E(y)
\]

Plugging this expression in (\ref{strategy}), we obtain%
\[
a_{G}(\theta )=\frac{1-\beta }{1-\beta \lambda +\gamma (1-\beta )}\theta +%
\frac{\beta }{1-\beta \lambda +\gamma (1-\beta )}E(y)
\]%
Plugging this in (\ref{eq x}) in (\ref{eq y}) and repeatedly taking
expectations, we obtain $E(y)=0$ such that $a_{G}(\theta )$ is given by (\ref%
{a(theta,u)}). We have thus pinned down $c_{1}=0$, $c_{1}=1-\beta $ and $%
c_{2}=\beta \lambda -\gamma (1-\beta )$ in (\ref{strategy}). (The value of $%
c_{3}$ is irrelevant because $E(y)=0$.) Since $\lambda ,\gamma \in \lbrack
0,1]$ and $\beta \in (0,1)$, $c_{2}\neq 1$ such that (\ref{strategy}) is
well-defined and unique.$\medskip $
\end{proof}

The DM's subjectively optimal strategy $a_{G}(\theta )$ has several
noteworthy features, in comparison with the correct-specification benchmark $%
a_{G^{\ast }}(\theta )$.$\medskip $

\noindent \textit{Sensitivity to the state of Nature }$\theta $

\noindent The coefficient of $\theta $ in the expression for $a_{G}(\theta )$
is lower than $1/(1+\gamma )$, which is the correct-specification benchmark
value. This means that the DM's reverse-causality error results in
diminished responsiveness to the state of Nature. The reason is that the DM
erroneously believes that $\theta $ is not the only independent factor
affecting $x$ (apart from $a$). The extent of this rigidity effect depends
on $\tau $, as explained below.$\medskip $

\noindent \textit{The relevance of }$\tau $

\noindent In the correct-specification benchmark, the variances of the error
terms in the relations (\ref{eq x})-(\ref{eq y}) are irrelevant for the DM's
behavior. This is a consequence of the quadratic-normal specification of the
model. In contrast, the DM's reverse-causality error lends these variances a
crucial role. When $\tau =\sigma _{\varepsilon }^{2}/\sigma _{\eta }^{2}$ is
small -- corresponding to the case that the equation for $x$ is precise
relative to the equation for $y$ -- $a_{G}(\theta )$ is closer to the
correct-specification benchmark $a_{G^{\ast }}(\theta )$. In contrast, when $%
\tau $ is large, the departure of $a_{G}(\theta )$ from $a_{G^{\ast
}}(\theta )$ is big. In particular, in the $\tau \rightarrow \infty $ limit, 
$a_{G}(\theta )$ is entirely unresponsive to $\theta $.

The intuition for this effect is as follows. When forming the belief $%
E_{G}(x\mid \theta ,a)$, the DM integrates over $y$ as if it is an
exogenous, independent variable (this is the DM's basic causal error) and
estimates the conditional expectation $E(x\mid \theta ,a,y)$ for each value
of $y$. This conditional expectation represents a signal extraction problem,
and therefore takes into account the relative precision of the equations for 
$x$ and $y$. When $\tau $ is small, $y$ is uninformative of $x$ relative to $%
\theta $, and therefore the conditional expectation $E(x\mid \theta ,a,y)$
places a low weight on $y$. As a result, the DM's erroneous treatment of $y$
does not matter much, such that the DM's belief ends up being approximately
correct. In contrast, when $\tau $ is large, $\theta $ is uninformative of $%
x $ relative to $y$, and therefore the conditional expectation $E(x\mid
\theta ,a,y)$ places a large weight to $y$. Since the DM regards $y$ as an
exogenous variable that is independent of $\theta $, the DM ends up
attaching low weight to the realized value of $\theta $ when evaluating $%
E_{G}(x\mid \theta ,a)$.$\medskip $

\noindent \textit{The role of }$\lambda $

\noindent Recall that the parameter $\lambda $ measures the direct effect of 
$a$ on $y$, which is the outcome variable that the DM erroneously regards as
an independent cause of $x$. The case of $\lambda =0$ corresponds to a DM
who is a \textquotedblleft spectator\textquotedblright\ as far as the
outcome variable $y$ is concerned. At the other extreme, the case of $%
\lambda =1$ corresponds to a DM whose action fully offsets the effect of $x$
on $y$.

It is easy to see from (\ref{a(theta,u)}) that the departure of the DM's
subjectively optimal action from the correct-specification benchmark becomes
smaller as $\lambda $ grows larger. Accordingly, the DM's welfare loss due
to his reverse-causality error becomes smaller. When $\lambda =1$, the DM's
action $coincides$ with the correct-specification benchmark. In this case,
the DM's reverse-causality error inflicts no welfare loss. In other words,
the fact that the DM is not a spectator but an actor who influences $y$
protects him from the consequences of his erroneous view of $y$ as a cause
of $x$.

The intuition for this effect is as follows. When $\lambda =1$, $y=x-a+\eta $%
. This means that given $\theta $,%
\[
y=x-E_{G}(x\mid a,\theta )+\eta 
\]%
In other words, $y$ is equal to the difference between the realization of $x$
and its point forecast, plus independent noise. The point forecast can be
viewed as an OLS estimate based on some linear regression (where $x$ is
regressed on $\theta $, $a$ and $y$, and then $y$ is integrated out). From
this point of view, $y$ is an OLS residual (plus independent noise). By
definition, this residual is independent of the regression's variables, and
therefore its incorrect inclusion has no distorting effect on the estimation
of $x$.

Consider the pure-prediction case of $\gamma =0$, and let us revisit Example
2.3, which concerns inflation forecasts based on a wrong Phillips curve that
inverts the direction of causality between inflation and employment. Recall
that the case of $\lambda =1$ corresponds to the \textquotedblleft new
classical\textquotedblright\ monetary theory that anticipated inflation has
no real effects. As it turns out, in this case the private sector forms
correct inflation forecasts despite its wrong model. A discrepancy with the
rational-expectations forecast -- in the direction of more rigid forecasts
that only partially react to fluctuations in $\theta $ -- arises when $%
\lambda <1$ -- i.e. when anticipated inflation has real effects.

\section{Two Variations}

This section explores two variations on the basic model of Section 2. Each
variation relaxes one of the twin assumptions that $G$ inverts the link
between $x$ and $y$ and treats $y$ as an exogenous variable.

\subsection{Belief in Exogeneity without Reverse Causality}

Let the DM's subjective DAG be the rightward DAG in Figure 1, and modify the
true model $G^{\ast }$ (given by the leftward DAG in Figure 1) with the
following DAG $G^{\ast \ast }$:%
\[
\begin{array}{ccccc}
\theta  & \rightarrow  & a &  &  \\ 
& \searrow  & \downarrow  & \searrow  &  \\ 
&  & x & \leftarrow  & y%
\end{array}%
\]%
Specifically, the equations for $x$ and $y$ are%
\begin{eqnarray*}
y &=&\delta a+\eta  \\
x &=&\theta -\kappa a+\alpha y+\varepsilon 
\end{eqnarray*}%
where $\alpha ,\delta ,\kappa $ are constants in $(0,1)$, and $\varepsilon
\sim N(0,\sigma _{\varepsilon }^{2})$ and $\eta \sim N(0,\sigma _{\eta }^{2})
$ are independently distributed noise variables.

The DM's subjective model $G$ departs from $G^{\ast \ast }$ by removing the
link $a\rightarrow y$, but otherwise the DAGs are identical. Thus, the only
causal misperception that $G$ embodies relative to the true model $G^{\ast
\ast }$ is a belief that $y$ is an exogenous, independent variable. The
following example provides an illustration of this specification of $G$ and $%
G^{\ast \ast }$.\medskip 

\noindent \textit{Example 4.1: Public health}

\noindent Suppose that $\theta $ represents initial conditions of a public
health situation; the DM is a public-health authority and $a$ represents the
intensity of some mitigating public-health measure; $y$ represents the
population's behavioral personal-safety response (higher $y$ corresponds to
more lax behavior); and $x$ represents the eventual public health outcome.
The authority's error is that it takes the likelihood of various scenarios
of the population's behavior as given without taking into account the fact
that it responds to the intensity of the public-health measure.\medskip 

We will now see that although this error has non-null behavioral
implications, those are quite different from what we saw in Section 3. As in
the main model, the DM chooses $a$ such that%
\[
a=E_{G}(x\mid \theta ,a)=\int_{y}p(y)E(x\mid \theta ,a,y)
\]%
The difference is that now, $E(x\mid \theta ,a,y)$ is more straightforward
to calculate because $x$ is conditioned on its actual direct causes:%
\[
E(x\mid \theta ,a,y)=\theta -\kappa a+\alpha y
\]%
I take it for granted that $E(y)=0$. Therefore,%
\[
a=\theta -\kappa a
\]%
such that%
\[
a_{G}(\theta )=\frac{\theta }{1+\kappa }
\]%
By comparison, the correct-specification action would satisfy%
\[
a=E(x\mid \theta ,a)=\theta -\kappa a+\alpha \delta a
\]%
such that%
\[
a_{G^{\ast \ast }}(\theta )=\frac{\theta }{1+\kappa -\alpha \delta }
\]

We see that as in the model of Section 2, treating $y$ as an independent
exogenous variable leads to a more rigid response to $\theta $. Here the
reason is that since $E(y)=0$, the average unconditional effect of $y$ on $x$
is null, whereas conditional on $a>0$ it is positive. Therefore, failing to
treat $y$ as an intermediate consequence of $a$ leads the DM to neglect a
causal channel that affects $x$, which means that he ends up underestimating
the total effect of $a$ on $x$.

However, unlike the basic model of Section 2, the variances of the noise
parameters play no role in the solution: a change in the precision of the
equations for $x$ or $y$ will have no effect on the outcome. Finally, unlike
the basic model, the decision context does not protect the DM from his own
error. Indeed, a larger $\delta $ only $magnifies$ the discrepancy between $%
a_{G}(\theta )$ and $a_{G^{\ast \ast }}(\theta )$. The conclusion is that
the mere failure to recognize the endogeneity of $y$ leads to very different
effects than when this failure is combined with the reverse-causality error.

\subsection{Reverse Causality without Belief in Exogeneity}

In this sub-section I consider a variation on the model that retains the
reverse causality aspect of $G$, while relaxing the assumption that the DM
regards $y$ as an independent exogenous variable. Thus, assume that the true
DAG is $G^{\ast }$, as given by Figure 1, while the DM's subjective DAG $G$
is%
\begin{equation}
\begin{array}{ccccc}
\theta & \rightarrow & a &  &  \\ 
& \searrow & \downarrow & \searrow &  \\ 
&  & x & \leftarrow & y%
\end{array}
\tag{Figure 2}
\end{equation}%
The subjective model departs from the true model by inverting the link
between $x$ and $y$, but otherwise it is identical to $G^{\ast }$.

The DM's misperception can be viewed in terms of the
conditional-independence properties that it gets wrong. While $G$ postulates
that $y\perp \theta \mid a$, $G^{\ast }$ violates this property because of
the causal channel from $\theta $ to $y$ that passes through $x$.\medskip

\noindent \textit{Example 4.2: More parenting}

\noindent Suppose that $\theta $ represents an adolescent's initial
conditions. The DM is the adolescent's parent. The variable $y$ represents
the amount of time that the adolescent spends online, and $a$ represents the
intensity of the parent's attempts to limit this online exposure. The
variable $x$ represents the adolescent's mental health. Thus, according to
the true model, online exposure is not a cause but a consequence of mental
health. The parent's intervention may have a direct effect on mental health,
possibly because it includes spending more time with his child, which may
have direct effects on $x$, independently of the activity it substitutes
away from. The parent believes that online exposure has a direct causal
effect on the adolescent's mental health.\medskip 

Let us analyze the DM's subjectively optimal strategy under this
specification of $G$. The DM chooses $a$ to minimize%
\begin{equation}
\int_{x}p_{G}(x\mid \theta ,a)(x-a)^{2}=\int_{y}p(y\mid a)\int_{x}p(x\mid
\theta ,a,y)(x-a)^{2}  \label{extended pG}
\end{equation}%
Note that the conditional probability term $p(y\mid a)$ is not invariant to
the DM's strategy $(p(a\mid \theta ))$. Therefore, unlike the previous
specifications of the true and subjective DAGs, here we cannot define the
DM's subjective optimization unambiguously. Let us apply the notion of
personal equilibrium as in Spiegler (2016). A DM's strategy $(p(a\mid \theta
))$ is a personal equilibrium if it always prescribes subjective optimal
actions with respect to the belief $(p_{G}(x\mid \theta ,a))$, which in turn
is calculated when we take the DM's strategy as given. I now analyze
personal equilibria for this specification.\footnote{%
In general, this definition requires trembles in order to be fully precise.
This subtlelty is irrelevant for our present purposes.}

Using essentially the same argument as in the proof of Proposition \ref%
{mainresult}, the DM's strategy satisfies:%
\begin{equation}
a=E_{G}(x\mid \theta ,a)  \label{a=EG}
\end{equation}%
where%
\[
E_{G}(x\mid \theta ,a)=\int_{y}p(y\mid a)E(x\mid \theta ,a,y)
\]

Calculating $E(x\mid \theta ,a,y)$ proceeds exactly as in the proof of
Proposition \ref{mainresult} because the true process that links these four
variables is the same. Therefore,%
\[
E(x\mid \theta ,a,y)=(1-\beta )\theta +(\beta \lambda +\beta \gamma -\gamma
)a+\beta y 
\]%
where $\beta $ is given by (\ref{beta}). It follows that%
\begin{eqnarray*}
E_{G}(x &\mid &\theta ,a)=\int_{y}p(y\mid a)[(1-\beta )\theta +(\beta
\lambda +\beta \gamma -\gamma )a+\beta y] \\
&=&(1-\beta )\theta +(\beta \lambda +\beta \gamma -\gamma )a+\beta
\int_{y}p(y\mid a)y
\end{eqnarray*}

Our task now is to calculate the last term of this expression: First,
plugging (\ref{eq x})-(\ref{eq y}), we obtain%
\begin{eqnarray*}
\beta \int_{y}p(y &\mid &a)y=\beta \int_{\theta ^{\prime }}p(\theta ^{\prime
}\mid a)\int_{y}p(y\mid \theta ^{\prime },a)y \\
&=&\beta \int_{\theta ^{\prime }}p(\theta ^{\prime }\mid a)[\theta ^{\prime
}-\gamma a-\lambda a] \\
&=&\beta E(\theta ^{\prime }\mid a)-\beta \left( \gamma +\lambda \right) a
\end{eqnarray*}%
where $E(\theta ^{\prime }\mid a)$ is determined by the DM's equilibrium
strategy, which we have yet to derive.

Plugging this in the expression for $E_{G}(x\mid \theta ,a)$, we obtain%
\begin{eqnarray*}
E_{G}(x &\mid &\theta ,a)=(1-\beta )\theta +(\beta \lambda +\beta \gamma
-\gamma -\beta \gamma -\beta \lambda )a+\beta E(\theta ^{\prime }\mid a) \\
&=&(1-\beta )\theta -\gamma a+\beta E(\theta ^{\prime }\mid a)
\end{eqnarray*}%
Plugging in (\ref{a=EG}) and rearranging, we obtain%
\[
\theta =\frac{(1+\gamma )a-\beta E(\theta ^{\prime }\mid a)}{1-\beta } 
\]%
This equation defines $\theta $ as a deterministic function $f$ of $a$, such
that $E(\theta ^{\prime }\mid a)=f(a)$. It follows that%
\[
f(a)=\frac{(1+\gamma )a-\beta f(a)}{1-\beta } 
\]%
or%
\[
f(a)=(1+\gamma )a 
\]%
Inverting the function, we obtain%
\[
f^{-1}(\theta )=\frac{\theta }{1+\gamma } 
\]%
which is the DM's unique personal equilibrium strategy.

Thus, when the DM correctly perceives that $a$ has a direct causal effect on 
$y$, the fact that he inverts the causal link between $x$ and $y$ is
immaterial for his behavior, which coincides with the correct-specification
benchmark. It follows that the DM's belief that $y$ is independent of $%
\theta $ is essential for the anomalous effects of our main model. The same
conclusion would be obtained if we assumed instead that the DM's subjective
DAG does not contain a direct $a\rightarrow y$ link but instead contains the
link $\theta \rightarrow y$ (or an indirect link $\theta \rightarrow \phi
\rightarrow y$, where $\phi $ is a variable that is objectively correlated
with $\theta $ but independent of all other variables conditional on $\theta 
$).

\section{Conclusion}

This paper explored behavioral consequences of a certain reverse-causality
error -- inverting the causal link between two outcome variables and deeming
one of them exogenous -- in quadratic-normal environments. Two novel
qualitative effects emerge from our analysis. First, the DM's behavior is
sensitive to the variances of the noise terms in the equations for the two
outcome variables $x$ and $y$ -- something that would not arise in the
quadratic-normal setting under a correctly specified subjective model.
Second, the anomalous effects vanish when the DM's direct effect on the
final outcome variable $y$ fully offsets the effect that the intermediate
outcome variable $x$ has on $y$. In this sense, the decision context
protects the DM from his reverse-causality error.

As we saw in Section 4, these novel effects crucially rely on the
combination of the two aspects of the DM's model misspecification -- namely,
his inversion of the causal link between the two outcome variables and his
belief in the exogeneity of $y$. When one of these assumptions is relaxed,
the DM's behavior ceases to display these effects and can even revert to the
correct-specification benchmark. This should not be construed as
\textquotedblleft fragility\textquotedblright\ of our analysis, but rather
as further demonstration that in Gaussian environments,
rational-expectations predictions tend to be robust to many causal
misspecifications (see Spiegler (2020b)). The same causal misperceptions
would be less innocuous under a non-Gaussian objective data-generating
process.

And of course, the specification of the true and subjective DAGs in Figure 1
does not exhaust the range of relevant reverse-causality errors. Spiegler
(2016) gives an example in which the true DAG is $a\rightarrow x\leftarrow y$%
, the DM's wrong model is $a\rightarrow x\rightarrow y$, and $y$ (rather
than $x$) is the payoff-relevant variable. That is, in reality $y$ is
exogenous and the DM's reverse-causality error leads him to believe that $a$
causes $y$ -- the exact opposite of the situation in our main model. Eliaz
et al. (2021) quantify the maximal possible distortion of the estimated
correlation between $x$ and $y$ due to this error (as well as more general
versions of it).

Thus, I am not claiming that the results in this paper are universal
features of situations in which the true and subjective DAGs disagree on the
direction of a certain link. That would be analogous to claiming that
subgame perfect equilibrium in an extensive-form game is robust to changes
in the game form. Nevertheless, as the examples in this paper demonstrated,
many real-life situations fit the mold of Figure 1 and its variations, such
that hopefully the analysis in this paper has provided relevant insights
into the consequences of reverse-causality errors for agents who act on
them.\bigskip

\end{document}